\newtheorem{theorem}{\bf Theorem}
\newtheorem{corollary}{\bf Corollary}
\newtheorem{lemma}{\bf Lemma}
\DeclareMathOperator*{\argmin}{arg\,min}
\title{\LARGE \bf
Closed-loop Parameter Identification of Linear Dynamical Systems\\ through the Lens of Feedback Channel Coding Theory
}
\author{Ali Reza Pedram$^{1}$ and Takashi Tanaka$^{2}$
\thanks{$^{1}$ Department of Mechanical          Engineering,
        University of Texas at Austin, Austin, TX, USA.
        {\tt\small apedram@utexas.edu}}%
\thanks{$^{2}$ Department of Aerospace Engineering and Engineering Mechanics, University of Texas at Austin, Austin, TX, USA.
        {\tt\small ttanaka@utexas.edu}}%
}
\begin{document}

\maketitle
\thispagestyle{empty}
\pagestyle{empty}

\begin{abstract}
This paper considers the problem of closed-loop identification of linear scalar systems with Gaussian process noise, where the system input is determined by a deterministic state feedback policy. The regularized least-square estimate (LSE) algorithm is adopted, seeking to find the best estimate of unknown model parameters based on  noiseless measurements of the state.  We are interested in the fundamental limitation of the rate at which unknown parameters can be learned, in the sense of the D-optimality scalarization criterion subject to a quadratic control cost.  We first establish a novel connection between a closed-loop identification problem of interest and a channel coding problem involving an additive white Gaussian noise (AWGN) channel  with feedback and a certain structural constraint. 
Based on this connection, we show that the learning rate is fundamentally upper bounded by the capacity of the corresponding AWGN channel.  Although the optimal design of the feedback policy remains challenging, we derive conditions under which the upper bound is achieved. Finally, we show that the obtained upper bound implies that super-linear convergence is unattainable for any choice of the policy.   
\end{abstract}

\section{INTRODUCTION}
System identification (ID)
has been a subject of research interest broadly in control and computer science communities \cite{lennart1999system,sutton1998introduction}. 
Research efforts on the ID methods have generated a number of key results, such as the guaranteed consistency (convergence of estimates to true parameters) of the open-loop methods for uniform asymptotically stable systems under the persistently exciting inputs \cite{morgan1977uniform,maghenem2017strict,morgan1977stability}. 
Besides consistency, the rate at which the true values are recovered is another important aspect of system ID settings. In \cite{loria2004explicit}, an upper bound for the rate of convergence is provided for a specific class of ID schemes. It is also shown that for LTI systems an adaptive observer can achieve exponential convergence \cite{kreisselmeier1977adaptive}.

In this paper, we consider the problem of closed-loop ID of a linear stochastic system with quadratic cost. We seek to characterize feedback policies achieving high convergence rates with low input costs.  
The problem we study is generally known as the \textit{optimal input design} (see, e.g., \cite{bombois2011optimal} and the references therein) which has been extensively studied in the literature.
A large portion of input design methods for linear systems is in frequency domain and is based on the \textit{asymptotic theory}. This theory states that the  quadratic \textit{predictive estimator}, under some mild assumption, converges to the true parameters and   the covariance of estimation error decays linearly in number of samples $T$ as $\frac{1}{T} P$ in the limit of $T\rightarrow \infty$, where $P$ is referred to as asymptotic covariance matrix (see \cite[Chapter~9]{lennart1999system} for more details).

 It is shown for the open-loop ID of linear systems in \cite{lennart1999system} that the inverse of asymptotic covariance matrix is an affine function of the input power spectrum. Based on this affine relation, it is shown in \cite{jansson2005input} that by parametrizing the input power spectrum and imposing  the power constraint via Parseval's theorem, the input design problem can be formulated as an LMI. A deeper variance analysis for both open-loop and closed-loop ID of Box-Jenkins models is provided in \cite{bombois2005open}.
In the time domain approach, it is proposed in \cite{manchester2010input} to maximize, in some sense, the Fisher information matrix based on the recognition that the inverse of the Fisher information is a lower bound on the achievable covariance matrix of an unbiased estimator \cite{kailath2000linear}. 



Recently, non-asymptotic results are derived in \cite{simchowitz2018learning}  and \cite{dean2017sample} for ID of linear systems  based on non-stationary statistical analysis. In this approach, ID is performed via Gaussian excitation through independant \textit{rollouts} which are long sequences of inputs.  The upper bound for the estimation error of the parameters is then computed, using only the last measurement of each rollout.

 In the reinforcement learning literature, it is common to analyze the rate at which the optimal policy is reconstructed in terms of \textit{regret}. Regret is defined to be the performance deviation from the controller designed based on the true value of the parameters. For instance, \cite{abbasi2011regret} proposed to use the optimism in the face of uncertainty (OFU) principle, by propagating the confidence ellipsoid on the true parameters. 
The exponential dependence of regret on the order of the system was later reduced to linear dependency under further sparsity constraints on the dynamics in \cite{ibrahimi2012efficient}. A Thompson sampling-based learning algorithm with dynamic episodes was proposed in \cite{ouyang2017learning}, showing that the regret up to time $T$ is bounded by $O(\sqrt{T})$ . 

To gain further insights on the best possible learning rate, in this paper we introduce a new perspective on the closed-loop ID problems through the lens of feedback channel coding theory. To relate the system ID problem with information theory, we adopt the D-optimality scalarization metric instead of regret as the learning performance. We first observe that the system ID problems with input cost and the communication over noisy channels with input power constraint  share the same spirit of information gain maximization.
Inspired by this observation, our analysis builds a bridge between system ID and channel coding theories.

\subsection{Contributions}
We uncover a close kinship between  closed-loop ID schemes and communications schemes over the channels with feedback. This connection enables us to use the well established information-theoretic tools to provide non-stationary results, such as theoretical upper-bounds for the rate of convergence, and to specify the characteristics of optimal ID polices. More precisely, this paper makes the following contributions:
\begin{itemize}
    \item 
    We show the equivalence between the system ID problem with the D-optimality criterion and a channel coding problem  under the existence of noiseless feedback. To the best of our knowledge, this equivalence has not been established in the existing literature. 
    \item  A correspondence between the rate of convergence and the feedback channel capacity (the temporal average of directed information from channel input to its output) is made. Based on this analogy, a relation between the excitation cost and the upper-bound of  the achievable reduction of estimation error within a finite number of ID steps, which is a non-stationary analysis, is provided. This upper bound shows that the convergence is at most linear. 
    \item  Although the existence of an ID scheme that achieves the provided upper-bound is not known currently, the conditions under which the derived upper bound is achieved are discussed. More precisely,  we show that the upper bound is tight if and only if the state sequence is temporally independent and the input distribution of the corresponding channel matches the capacity achieving distribution.  
\end{itemize}

\subsection{Notation and Convention} 
Random variables and their realizations are denoted by upper-case and lower-case symbols respectively. The notation $X^t = (X_0,  X_1, \dots X_t)$ is used to denote the history of state $X$. $h(Y):=-\int p(y)\log p(y)dy$ is used to denote the differential entropy of random variable $Y$. For a Gaussian random variable $Y\sim \mathcal{N}(y,\Sigma_y)$, $h(Y)=\frac{1}{2}\log\det(2\pi e\Sigma_y)$. Mutual information between random variables $X$ and $Y$ is denoted by $I(X;Y):=h(X)-h(X|Y) = h(Y)-h(Y|X)$. 

\section{Problem Formulation}

Consider the discrete-time linear  system
\begin{align}
\label{dyn}
    X_{t+1}=A X_{t}+B U_{t}+W_t \quad X_0=0,
\end{align} 
where $X_t$ is a scalar-valued state and $W_t\sim \mathcal{N}(0, W)$ are i.i.d Gaussian random variables with variance $W$. We assume that the realization  of the parameter  $\Theta=[A \; B]^\intercal$ is not directly observable but the prior belief about $\Theta$ is known to be $\Theta \sim \mathcal{N}(\hat{\theta}_0, \Pi_0)$, where $\Pi_0 \succ 0$.

We formulate the optimal ID for the system defined above as finding the deterministic excitation policy $U_t=g_t(X^{t})$ optimizing a performance metric for a learning rate we introduce shortly. By defining $z_{t} :=[x_t \; u_t]^\intercal$, the state transition can be written as $x_{t+1}=z_{t}^\intercal \theta+ w_t$ for $ t=0, \dots, T-1$. We recursively compute the minimum variance unbiased estimator of the system's parameter $ \theta $, by exploiting the $l_2$-regularized LSE algorithm based on the full observation of state history $X^T$. Regularized estimation is selected to enforce the numerical stability of the algorithm and to exploit the initial knowledge of $\theta$. The best estimate can be characterized as \cite{kailath2000linear}:
\[\hat{\theta}_t=\argmin_{\theta} ||\theta-\hat{\theta}_0||_{\Pi_0^{-1}}^2+||\mathcal{X}_{t}-\mathcal{Z}_{t} \theta||_{\mathcal{W}^{-1}}^2,\] 
 where $\mathcal{Z}_t:=[z_0^\intercal; \dots; z_{t-1}^\intercal]$, $\mathcal{X}_t:=[x_1; \dots; x_{t}]$, and $\mathcal{W}:=WI$. We define $\Pi_t^{-1}:=\sum_{i=1}^{t} z_{i-1} W^{-1}z_{i-1}^\intercal+\Pi_0^{-1}$, which yields for all $t$ that $\Pi_{t}^{-1}=\Pi_{t-1}^{-1}+ z_{t-1} W^{-1}z_{t-1}^\intercal$, starting from $t=1$. By the matrix inversion lemma, the optimal estimate and its error covariance can be recursively computed as:
\begin{subequations}
 \label{LSE}
\begin{align}
\Pi_t=&\ \Pi_{t-1}-\frac{\Pi_{t-1} z_{t-1} z_{t-1}^\intercal \Pi_{t-1}}{W+z_{t-1}^\intercal \Pi_{t-1}z_{t-1}}\\ 
\hat{\theta}_{t}=&\ \hat{\theta}_{t-1}+ \frac{\Pi_{t-1} z_{t-1} }{W+z_{t-1}^\intercal \Pi_{t-1}z_{t-1}} (x_{t}- z_{t-1}^\intercal \hat{\theta}_{t-1}).
\end{align}
\end{subequations}

Note that this iteration corresponds to Kalman filter (KF) associated with the following system.
\begin{subequations}
\label{equi-sys}
 \begin{align}
 &\Theta_t=\Theta_{t-1}\\ \label{sensormodel}
 & X_t=Z_{t-1}^\intercal \Theta_t +W_{t-1},
 \end{align}
 \end{subequations}
 where $\hat{\theta}_t = \mathbb{E}[\Theta|X^t=x^t]$ is the estimator of $\Theta_t$ and $\Pi_t=\mathbb{E}[(\Theta-\hat{\theta}_t)(\Theta-\hat{\theta}_t)^\intercal| X^t=x^t]$ is the associated error covariance. 
 
Nonsingularity of $\mathbb {E}[Z_t Z_t^\intercal]$ guarantees the consistency (convergence to the true $\theta$) \cite{soderstrom1983instrumental}. However, it is worth noting that linear control policies $U_t=kX_t$ do not satisfy this nonsingularity condition and cannot be deployed for deterministic closed-loop ID of linear systems. We measure the performance of  the closed-loop ID process based on the following excitation cost and information utility. 
\subsection{Excitation Cost}
Inspired by the standard linear quadratic regulator (LQR), we adopt the following quadratic function of states and inputs:
\begin{equation}
\label{lqg}
J_T:= \frac{1}{T}( \sum_{t=0}^{T-1} \mathbb{E}[ q X^2_t+ r U^2_t]+q\mathbb{E}[X^2_T])
\end{equation}
as the excitation cost, where $q$  and $r$ are positive scalars.
 
 \subsection{Information Utility}
The information utility can be defined based on the so-called alphabetical design criteria \cite{chaloner1995bayesian}, which includes several scalarizations of error covariance such as A-optimality, D-optimality, E-optimality, and T-optimality for which the trace, the log-determinant, the largest eigenvalue, and the trace of the inverse of error covariance is adopted, respectively.

In this work, we select the D-optimality scalarization of the estimation error covariance with multiplicative factor $\frac{1}{2}$ as a metric of information utility. This metric is equivalent to the  information-theoretic quantity of \textit{entropy}. Specifically, the information utility obtained at time $t$ is defined as the entropy reduction:
\begin{align*}
F_t(X^t=x^t)=& \frac{1}{2}\log\det(\Pi_{t-1})-\frac{1}{2}\log\det(\Pi_{t}) \\
=&h(\Theta|X^{t-1}=x^{t-1})-h(\Theta|X^{t}=x^{t}).
\end{align*}
The accumulated utility up to time $T$ will be
\[L_T(X^T=x^T):=\sum_{t=1}^T F_t= \frac{1}{2}\log\det(\Pi_{0})-\frac{1}{2}\log\det(\Pi_{T}).\]
We stress that the information utilities $F_t(X^t)$ and $L_T(X^T)$ are in fact random variables, since $\Pi_t$ is computed based on the realizations of  $Z^t$ (or equivalently $X^t$) in (\ref{LSE}). If we consider the expectation of information utility as $\bar{F_t}:=\mathbb{E}[F_t]$, we have
\begin{align*}
    \bar{F_t} &= \int_{\Omega} F_t(X^t=x^t) \; dP_{x^t} \\
    &= \int_{\Omega} \big(h(\Theta|X^{t-1}=x^{t-1})-h(\Theta|X^{t}=x^{t})\big) \; dP_{x^t}\\
    &= h(\Theta|X^{t})-h(\Theta|X^{t-1}).
\end{align*}
The expectation of the accumulated information utility $\bar{L}_T:=\mathbb{E}[L_T]$ is equal to the mutual information $I(\Theta, X^T)$:
\begin{align}
\nonumber
\bar{L}_T &= \sum_{t=1} ^T \bar{F}_t = \sum_{t=1} ^T \big( h(\Theta|X^{t})-h(\Theta|X^{t-1}) \big) \\ \nonumber
&= h(\Theta)-h(\Theta|X^T)=I(\Theta;X^T).
\end{align}

Mutual information $\bar{L}_T=I(\Theta;X^T)$ provides a metric quantifying how much the uncertainty of $\Theta$ i.e., $h(\Theta)$ decreases in expectation after observing the realizations of $X^T$. Based on these interpretations, we respectively refer to $\bar{F}_t$, $\bar{L}_T$, $\frac{\bar{L}_T}{T}$ as information gain, cumulative information gain, and the rate of convergence in the following sections. 

\subsection{Characterization of Optimal Policy}
Using the aforementioned measures of convergence and control performance, the main optimization problem considered in this paper can be formulated as:
\begin{equation}
\begin{split}
\label{mainprob}
\max_{\{U_t=g_t (X^{t})\}_{t=0}^{T-1}} &I(\Theta; X^{T})\\
\text{s.t.} \quad  
 &\frac{1}{T}(\sum_{t=0}^{T-1} \mathbb{E}[q X^2_t+r U^2_t]+q\mathbb{E}[X^2_T]) \leq \gamma\\
&X_{t+1}=A X_{t}+B U_{t}+W_t.
\end{split}
\end{equation}
We are also interested in the optimal time invariant policy $U_t=g(X^t)$ for infinite horizon counterpart of the problem (\ref{mainprob}) as 
\begin{equation}
\begin{split}
\label{main-inf}
\max_{U_t=g_t (X^{t})}& \liminf_{T\rightarrow \infty}\frac{1}{T}I(\Theta; X^{T})\\
\text{s.t.}  \quad & 
 \limsup_{T\rightarrow \infty}\frac{1}{T} (\sum_{t=0}^{T-1} \mathbb{E}[q X^2_t+r U^2_t]+q\mathbb{E}[X^2_T]) \leq \gamma\\
&X_{t+1}=A X_{t}+B U_{t}+W_t.
\end{split}
\end{equation}

\section{Proposed Approach }
The problem of maximizing mutual information with power constraints similar to  problems (\ref{mainprob}) and (\ref{main-inf}) has a long history in the study of communication over noisy channels. This similarity motivates us to bridge system ID and information theory by finding  channel coding problems analogous to (\ref{mainprob}) and (\ref{main-inf}). 

In this paper, in lieu of direct analysis, we model the system ID algorithm as a communication scheme over a Gaussian channel with feedback. To the best of our knowledge, this is the first time that such a connection is shown explicitly and the equivalence between the rate of convergence of the system ID and directed information between channel input and output is demonstrated.

We extend the existing results in the context of channel coding and investigate their implications such as theoretical bounds for the rate of convergence and the required conditions to achieve such bounds.

\subsection {Communication over a Noisy Channel with Feedback}
In this section, we review basic results for the problem of communication over noisy channels with feedback, which has been studied extensively in the network information theory literature \cite{el2011network}.
Consider the problem of transmitting a message $M$ over an additive white Gaussian noise (AWGN) channel with noiseless feedback depicted in Figure~\ref{fig:3}. 
The noise $W_t$ is drawn i.i.d from a Gaussian distribution $W_t\sim \mathcal{N}(0, W)$ and it is independent of input signal $Y^t$. The output of the channel is $X_t=Y_t+W_t$.

\begin{figure}[thpb]
    \centering
    \includegraphics[width=\columnwidth]{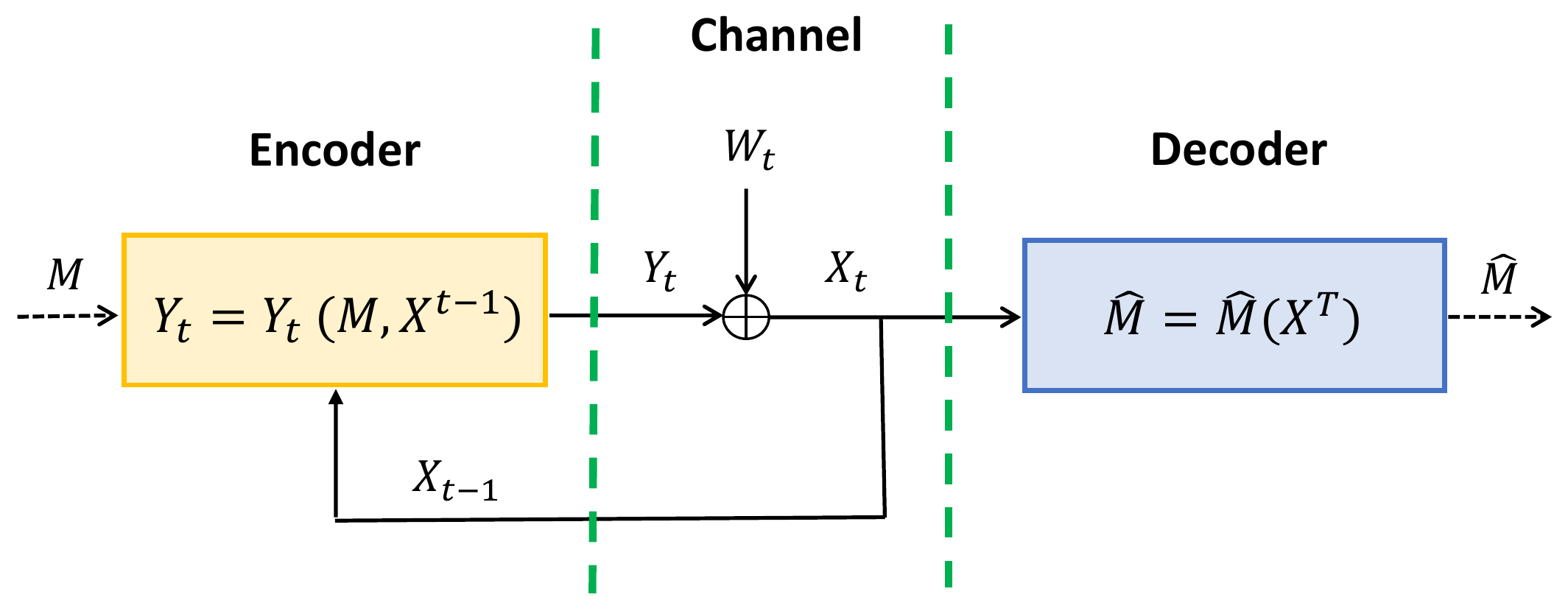}
    \caption{General Gaussian channel with feedback.}
    \label{fig:3}
\end{figure}

The capacity of the AWGN channel with feedback, denoted by $C_{FB}(P)$, is defined as the largest $R\in \mathbb{R}_{+}$ for which there exists an encoder-decoder pair such that the associated sequence of inputs $\{y_t(M,X^{t-1})\}_{t=1}^T$ for every message $M \in \{1, 2, \dots, 2^{TR}\}$ satisfies the power input constraint
\[ \sum_{t=1}^T\mathbb{E}_W[y^2_t(M,X^{t-1})]\leq TP,\]
 and the message can be decoded by the decoder after $T$ channel uses with diminishing probability of decoding error, i.e., $\limsup_{T\rightarrow \infty} P(\hat{M} \neq M)=0$. The reader is referred to \cite{cover2012elements,el2011network} for further discussion. The next theorem exhibits a connection between the feedback channel capacity $C_{FB}(P)$ and the directed information $I(Y^T\rightarrow X^T)=\sum_{t=1}^T I(Y^t;X_t|X^{t-1})$ for Markov channels \cite{cover2012elements} with feedback.
 
 \begin{theorem}
\label{theo-fbc}
For Markov channels with feedback, channel capacity is equal to the limsup of the temporal average of directed information from the channel input $Y^T$ to its output $X^T$ maximized over all causally conditioned
distributions $P(Y^T||X^{T-1}):=\Pi_{t=1}^T P(Y_t|X^{t-1})$ satisfying the power constraint i.e.,   
 \begin{equation}
 \label{fbc}
 \begin{split}
     C_{FB}(P)= \limsup_{T\rightarrow\infty} \!\!\!\! \max_{\substack{P(Y^T||X^{T-1})\\ \text{s.t.}~
     \frac{1}{T}\sum_{t=1}^T\mathbb{E}[Y^2_t]\leq P}} \!\!\!\!
     \frac{1}{T} I(Y^T\rightarrow X^T).
 \end{split}
 \end{equation}
 \end{theorem}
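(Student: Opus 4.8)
The plan is to establish (\ref{fbc}) by proving two matching bounds --- the converse $C_{FB}(P)\le\mathrm{RHS}$ and the achievability $C_{FB}(P)\ge\mathrm{RHS}$, where $\mathrm{RHS}$ denotes the right-hand side of (\ref{fbc}) --- following the standard template for feedback channel coding \cite{cover2012elements,el2011network} specialized to the directed-information characterization.

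For the converse, I would take an arbitrary length-$T$ feedback code of rate $R$: a uniform message $M\in\{1,\dots,2^{TR}\}$, an encoder $Y_t=y_t(M,X^{t-1})$ obeying $\sum_{t=1}^T\mathbb{E}_W[y_t^2(M,X^{t-1})]\le TP$ for every $M$, a decoder $\hat M=\psi(X^T)$, and error probability $P_e^{(T)}$. Fano's inequality gives $H(M|X^T)\le 1+P_e^{(T)}TR$, hence $TR(1-P_e^{(T)})\le I(M;X^T)+1$. The key step is to identify $I(M;X^T)$ with a directed information: writing $I(M;X^T)=\sum_{t=1}^T I(M;X_t|X^{t-1})$, using that $Y^t$ is a deterministic function of $(M,X^{t-1})$ to get $I(M;X_t|X^{t-1})=I(M,Y^t;X_t|X^{t-1})$, and invoking the Markov channel property $X_t\perp M\mid(Y^t,X^{t-1})$ to kill $I(M;X_t|Y^t,X^{t-1})$, each term reduces to $I(Y^t;X_t|X^{t-1})$, so $I(M;X^T)=I(Y^T\to X^T)$ under the causally conditioned input law induced by the code. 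Averaging the per-message power constraint over $M$ shows this law is feasible for the maximization in (\ref{fbc}); therefore $R(1-P_e^{(T)})\le \frac{1}{T}\max I(Y^T\to X^T)+\frac{1}{T}$, and letting $T\to\infty$ with $P_e^{(T)}\to 0$ and taking a limsup yields $C_{FB}(P)\le\mathrm{RHS}$.

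For achievability, fix $\epsilon>0$ and, for each large $T$, a causally conditioned law with $\frac{1}{T}\sum_{t=1}^T\mathbb{E}[Y_t^2]\le P-\delta$ whose normalized directed information is within $\epsilon$ of the supremum in (\ref{fbc}). I would run a random-coding argument in which each of the $2^{TR}$ messages is assigned an independent randomized \emph{strategy} that, given the realized feedback $X^{t-1}$, draws $Y_t$ from the chosen kernel $P(Y_t|X^{t-1})$; the decoder outputs the unique message whose strategy is jointly typical with the observed $X^T$ in the sense that the empirical (normalized) directed-information density exceeds $R$. The directed-information asymptotic equipartition property shows the transmitted message passes this test with probability tending to $1$, while for each of the other messages the independently drawn strategy passes only with probability at most $2^{-T(\frac{1}{T}I(Y^T\to X^T)-\epsilon)}$; a union bound over the $2^{TR}$ messages drives the error probability to zero whenever $R<\frac{1}{T}I(Y^T\to X^T)-2\epsilon$. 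Restricting to strategies whose realized power stays below $P$ enforces the cost constraint at negligible loss, so every $R<\mathrm{RHS}$ is achievable.

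The hard part is the achievability, and three points there require care: (i) a ``codeword'' over a feedback channel is a randomized feedback strategy rather than a fixed string, so the random-coding analysis must track the joint law of a message's strategy and the output sequence it induces; (ii) the ordinary joint-typicality lemma must be replaced by information-spectrum / directed-information AEP estimates; and (iii) the operational per-message power constraint must be reconciled with the average constraint in the optimization. All three are handled in the network information theory literature on directed information and feedback \cite{el2011network}, which I would adapt or invoke directly; for the AWGN channel an explicit Schalkwijk--Kailath-type scheme provides a shortcut for the achievability direction. A secondary subtlety, relevant already to the converse, is that the maximization in (\ref{fbc}) is written over the restricted class $\prod_{t=1}^T P(Y_t|X^{t-1})$, whereas a code induces the more general causally conditioned form $\prod_{t=1}^T P(Y_t|Y^{t-1},X^{t-1})$; one must additionally argue, using the Markov structure of the channel, that this enlargement does not change the optimal value.
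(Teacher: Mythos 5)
The paper does not prove this theorem at all --- its ``proof'' is a pointer to the literature (Tatikonda--Mitter, Permuter et al., Kim), and your sketch is a faithful reconstruction of exactly the argument those references use: the Fano-plus-Massey converse identifying $I(M;X^T)$ with $I(Y^T\rightarrow X^T)$, and achievability by random coding over feedback strategies (code-trees) with a directed-information AEP, with the Schalkwijk--Kailath scheme as the explicit AWGN shortcut. You also correctly flag the two genuine subtleties that a careless write-up would miss --- reconciling the per-message power constraint with the averaged one, and showing that restricting the maximization to kernels of the form $\prod_t P(Y_t|X^{t-1})$ rather than $\prod_t P(Y_t|Y^{t-1},X^{t-1})$ loses nothing (which for memoryless/Markov channels follows because $I(Y^T\rightarrow X^T)$ depends on the input law only through these marginal kernels) --- so your proposal is correct and takes the same route as the paper's cited sources.
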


 \begin{proof}
    See \cite{tatikonda2008capacity} for proof. Similar results are derived for the finite state channels without inter-symbol interference in \cite{permuter2009finite} and for  stationary nonanticipatory channels in \cite{kim2008coding}.
\end{proof}

It is well known \cite{shannon1956zero} that feedback does not increase the capacity of memory-less channels (including AWGN channels). Consequently, we have $C_{FB}(P)=C(P)$, where $C(P):=\frac{1}{2}\log(1+\frac{P}{W})$ is the capacity of AWGN channel without feedback \cite{cover2012elements}. Nevertheless, feedback helps to simplify the coding scheme and increases the achievable error exponent \cite{schalkwijk1966coding}.

\section{Main Results}

In this section, we construct a channel coding problem which is equivalent to the online parameter ID problems (\ref{mainprob}) and (\ref{main-inf}). We then demonstrate that cumulative information gain  and the directed information from the input to the output of the equivalent communication scheme  are equal. Based on this equivalence and Theorem~\ref{theo-fbc}, it is shown that the asymptotic rate of convergence i.e. the solution to (\ref{main-inf}) is upper-bounded by the capacity  of the corresponding channel $C(P)$. Even for case (\ref{mainprob}) with the finite number of ID steps where Theorem~\ref{theo-fbc} is not directly applicable, we show that the step-wise information gain is upper bounded by  $C(P)$ and that the value of (\ref{mainprob}) is upper bounded by $TC(P)$.

We provide necessary and sufficient conditions to achieve these upper limits. More precisely, an excitation policy $U_t=g_t(X^t)$ is shown to be optimal if and only if it is matched with the capacity achieving distribution for the corresponding channel and $X_t$ is an independent random process. \footnote{It is currently not known if there exists a deterministic feedback policy $U_t=g_t(X^t)$ that meets this optimality condition.} Finally, we prove that input power of the equivalent channel $P$ is bounded and finite for every policy with finite control cost $J$. Therefore, cumulative information gain $\bar{L}_T$ is upper bounded by the linear function of $TC(P)$ meaning that $\frac{\bar{L}_T}{T}=\mathcal{O}(1)$ and super-linear convergence is impossible.

\subsection{Equivalent Channel Formulation}
A system ID scheme with a feedback policy $U_t=g_t(X^t)$ can be modeled  as communication system over a Gaussian channel with noiseless, one-step delayed feedback as depicted in Figure~\ref{fig:1}. The sensor model (\ref{equi-sys}) is interpreted as an AWGN channel over which the message $\theta$ is communicated. The decoder tries to estimate $\theta$ based on the sequence $X^T$ of the channel outputs using (\ref{LSE}).


The following theorem states that the aforementioned measure of convergence is equivalent to directed information between the channel input and output. This theorem also indicates a close connection between the parameter ID problems (\ref{mainprob}) and (\ref{main-inf}) and the feedback channel coding problem (\ref{fbc}).
\begin{figure}[thpb]
    \centering
    \includegraphics[width=\columnwidth]{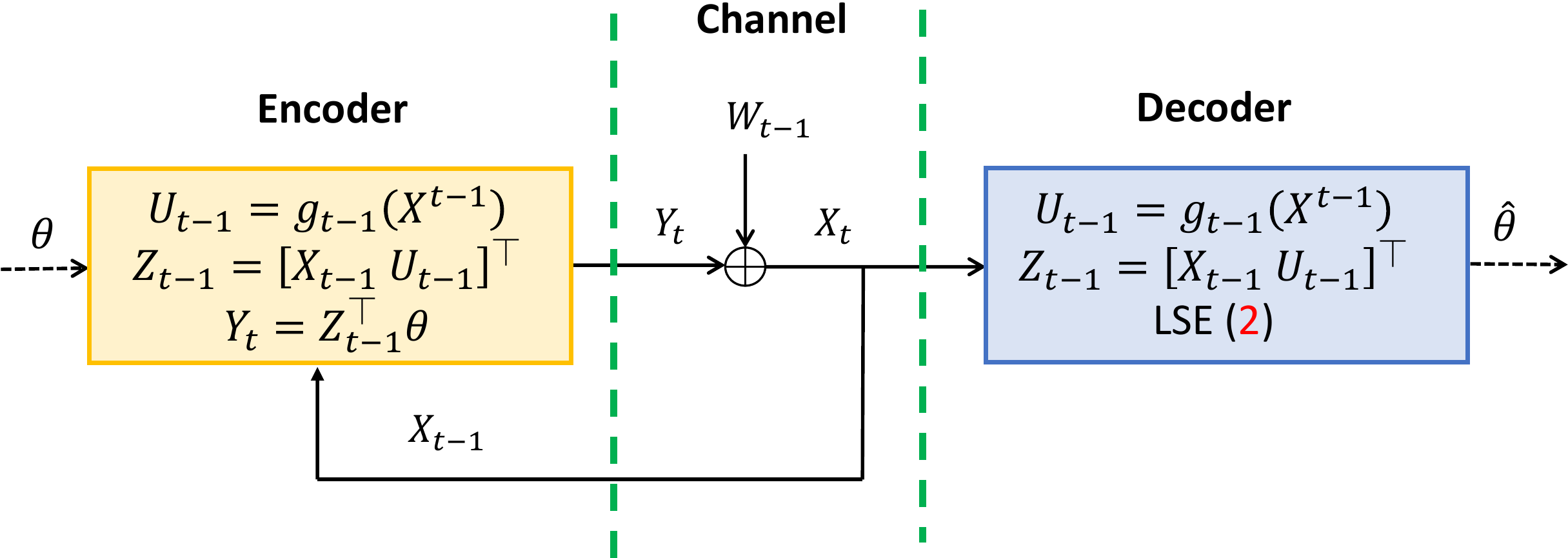}
    \caption{Channel representation of system ID with deterministic policy.}
    \label{fig:1}
\end{figure}

 \begin{theorem}
 \label{theo2}
For any deterministic control policy $U_t=g_t(X^{t})$, we have
\[ I(\Theta; X^T)= I(Y^T \rightarrow X^T),\]
and $\mathbb{E}[\det(\Pi_T)]\geq \det(\Pi_0)2^{-2I(Y^T\rightarrow X^T)}$.
 \end{theorem}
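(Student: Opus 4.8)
The plan is to prove the identity $I(\Theta;X^T)=I(Y^T\to X^T)$ first, and then obtain the determinant inequality from it by a one-line convexity argument.

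First I would record the structural facts of the equivalent channel in Figure~\ref{fig:1}. The channel input at step $t$ is $Y_t=Z_{t-1}^\intercal\Theta$ with $Z_{t-1}=[X_{t-1};\,g_{t-1}(X^{t-1})]$, and the output is $X_t=Y_t+W_{t-1}$. Because the policy $g_{t-1}$ is deterministic and causal, $Z_{t-1}$ is a function of $X^{t-1}$, so the block $Y^t=(Y_1,\dots,Y_t)$ is a deterministic function of $(\Theta,X^{t-1})$. Unrolling the recursion $X_s=Z_{s-1}^\intercal\Theta+W_{s-1}$ shows that $X^{t-1}$, and hence $Y^t$, is a function of $(\Theta,W^{t-2})$; since $W_{t-1}$ is independent of $(\Theta,W^{t-2})$, it is independent of $(\Theta,X^{t-1},Y^t)$.

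Then I would expand both sides and compare them term by term. The chain rule gives $I(\Theta;X^T)=\sum_{t=1}^{T}I(\Theta;X_t\mid X^{t-1})$, and the definition of directed information gives $I(Y^T\to X^T)=\sum_{t=1}^{T}I(Y^t;X_t\mid X^{t-1})$. Writing $I(\Theta;X_t\mid X^{t-1})=h(X_t\mid X^{t-1})-h(X_t\mid X^{t-1},\Theta)$ and $I(Y^t;X_t\mid X^{t-1})=h(X_t\mid X^{t-1})-h(X_t\mid X^{t-1},Y^t)$, the key point is that both subtracted terms equal $\tfrac12\log(2\pi eW)$: conditioned on $(X^{t-1},\Theta)$ the value $Z_{t-1}^\intercal\Theta$ is a deterministic function of the conditioning variables, and conditioned on $(X^{t-1},Y^t)$ it equals the given $Y_t$, so in both cases the conditional law of $X_t$ is that of a deterministic shift plus the independent noise $W_{t-1}\sim\mathcal N(0,W)$. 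Hence $I(\Theta;X_t\mid X^{t-1})=I(Y^t;X_t\mid X^{t-1})$ for each $t$, and summing over $t$ yields the identity. (Finiteness of $h(X_t\mid X^{t-1})$, which is not actually needed for the cancellation, also follows from the Kalman-filter interpretation, under which $X_t\mid X^{t-1}=x^{t-1}\sim\mathcal N(z_{t-1}^\intercal\hat\theta_{t-1},\,z_{t-1}^\intercal\Pi_{t-1}z_{t-1}+W)$.)

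Finally, for the inequality I would start from the pathwise identity already established in the text, $L_T(X^T)=\tfrac12\log\det\Pi_0-\tfrac12\log\det\Pi_T$, i.e.\ $\det\Pi_T=\det(\Pi_0)\,2^{-2L_T(X^T)}$. Taking expectation over $X^T$ and applying Jensen's inequality to the convex function $x\mapsto 2^{-2x}$ gives $\mathbb E[\det\Pi_T]=\det(\Pi_0)\,\mathbb E[2^{-2L_T(X^T)}]\ge \det(\Pi_0)\,2^{-2\mathbb E[L_T]}$, and since $\mathbb E[L_T]=\bar L_T=I(\Theta;X^T)=I(Y^T\to X^T)$ by the first part, this is exactly $\det(\Pi_0)\,2^{-2I(Y^T\to X^T)}$. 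The computations are short throughout; the only step requiring care is the bookkeeping in the channel structure — matching channel step $t$ with noise sample $W_{t-1}$ and checking $W_{t-1}\perp(\Theta,X^{t-1},Y^t)$ from the causal loop — because that independence is precisely what forces both conditional entropies to collapse to $\tfrac12\log(2\pi eW)$ regardless of whether we condition on $\Theta$ or on $Y^t$.
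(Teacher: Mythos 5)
Your proof is correct and follows essentially the same route as the paper: the chain-rule decomposition of $I(\Theta;X^T)$ into $\sum_t I(\Theta;X_t\mid X^{t-1})$, term-by-term identification with $I(Y^t;X_t\mid X^{t-1})$, and Jensen's inequality for the determinant bound. Your justification of the middle step---showing directly that both $h(X_t\mid X^{t-1},\Theta)$ and $h(X_t\mid X^{t-1},Y^t)$ collapse to $\tfrac{1}{2}\log(2\pi e W)$ via the independence of $W_{t-1}$ from $(\Theta,X^{t-1},Y^t)$---is somewhat more explicit than the paper's appeal to a ``1-1 causal mapping from $Y^t$ to $(\Theta,X^{t-1})$,'' but it is the same underlying idea.
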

\begin{proof}
By the chain rule, we have
\begin{subequations}
\begin{align}
I(\Theta;X^T)&=\sum_{t=1}^{T} I(\Theta; X_t| X^{t-1})\\  \label{ach1}
&=\sum_{t=1}^{T}\left( h( X_t| X^{t-1})-h(X_t|\Theta, X^{t-1})\right)\\ \label{ach2}
&=\sum_{t=1}^{T}\left( h(X_t| X^{t-1})-h(X_t|Y^t, X^{t-1})\right)\\
&=\sum_{t=1}^{T} I(Y^t;X_t|X^{t-1})= \!I(Y^T\rightarrow X^T), 
\end{align}
\end{subequations}
where from \eqref{ach1} to \eqref{ach2} we used the fact that the coding scheme is a 1-1 causal mapping from $Y^t$ to $(\Theta,X^{t-1})$. Additionally, $\frac{1}{2}\log(\mathbb{E}[\det(\Pi_T)]) \geq \frac{1}{2}\mathbb{E}[\log\det(\Pi_T)]=\frac{1}{2}\log\det(\Pi_0)-I(Y^T\rightarrow X^T)$, which yields $\mathbb{E}[\det(\Pi_T)]\geq \det(\Pi_0)2^{-2I(Y^T\rightarrow X^T)}$.
\end{proof}

\subsection{Upper Bound Analysis}

The equivalent channel representation of the system ID problem (Figure~\ref{fig:1}) is a specific example of general communication schemes over the additive Gaussian channel depicted in Figure~\ref{fig:3}. Therefore, the maximum directed information achievable by general feedback scheme is an upper bound to the directed information achievable by system ID scheme. In particular, $C_{FB}(P)$ characterized by Theorem~\ref{theo-fbc} provides an upper bound to the asymptotic learning rate i.e., $\lim_{T\rightarrow \infty} \frac{\bar{L}_T}{T}$ or the solution to (\ref{main-inf}).

These asymptotic results may not be sufficient to analyze the non-stationary performance when only a finite number of ID steps are performed. In the next lemma, we derive an upper-bound for obtainable information gain in each step, i.e., an upper bound for (\ref{dir-pr}). Note that
\begin{align*}
\bar{F}_t=&I(\Theta; X^{t})-I(\Theta; X^{t-1})\\
=&I(Y^t\rightarrow X^t)-I(Y^{t-1}\rightarrow X^{t-1})=I(Y^t;X_t|X^{t-1}).
\end{align*}
Assuming a step-wise energy constraint, $\mathbb{E}[Y^2_t]\leq P_t$, we intend to solve
\begin{equation}
\begin{split}
\label{dir-pr}
  \max_{P(Y_t|X^t)}\  & I(Y^t;X_t|X^{t-1})\\
  \text{s.t.} \quad &\mathbb{E}[Y^2_t]\leq P_t.
  \end{split}
\end{equation}

\begin{lemma}
For any coding-decoding scheme that  satisfies $\mathbb{E}[Y^2_t]\leq P_t$, the increase in the directed information i.e.,  $I(Y^t;X_t|X^{t-1})$ is bounded from above by $C(P_t)$. Consequently,  \[\bar{L}_T=I(\Theta;X^T)=I(Y^T\rightarrow X^T) \leq \sum_{t=1}^T C(P_t).\]
\begin{proof}
\begin{subequations}
\begin{align}
\nonumber
    &I(Y^t;X_t|X^{t-1})\\ \label{lem1a}
    &=h(X_t| X^{t-1})-h(X_t|Y^t, X^{t-1}) \\ \label{lem1b}
    & \leq h(X_t)-h(W_{t-1})\\ \label{lem1c}
    & = h(Y_t+W_{t-1})-h(W_{t-1})\\ \label{lem1d}
    & \leq \frac{1}{2} \log (1+\frac{P_t}{W})=C(P_t),
\end{align}
\end{subequations}
where from (\ref{lem1a}) to (\ref{lem1b}) we  used the fact that conditioning won't increase the entropy i.e. $h(X_t|X^{t-1})\leq h(X_t)$. The inequality in (\ref{lem1d}) comes from the non-Gaussianity of $Y_t$ and the power constraint in (\ref{dir-pr}). Notice that $Y_t$ and $W_{t-1}$ are independent. 
\end{proof}
\end{lemma}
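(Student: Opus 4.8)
The plan is to bound the per-step increment of directed information, $I(Y^t;X_t|X^{t-1})$, by the single-letter AWGN capacity expression $C(P_t)$, and then to sum over $t$ using the identity $I(\Theta;X^T)=I(Y^T\rightarrow X^T)=\sum_{t=1}^T I(Y^t;X_t|X^{t-1})$ already established in Theorem~\ref{theo2}.

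First I would expand $I(Y^t;X_t|X^{t-1})=h(X_t|X^{t-1})-h(X_t|Y^t,X^{t-1})$. For the subtracted term, I would recall that in the equivalent channel $X_t=Y_t+W_{t-1}$ with $Y_t=Z_{t-1}^\intercal\Theta$ a deterministic function of $(\Theta,X^{t-1})$, hence measurable with respect to the $\sigma$-algebra generated by the conditioning pair $(Y^t,X^{t-1})$; meanwhile $W_{t-1}$ is independent of $(\Theta,X^{t-1})$ and therefore of $(Y^t,X^{t-1})$. Consequently $h(X_t|Y^t,X^{t-1})=h(W_{t-1})=\tfrac12\log(2\pi e W)$. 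For the first term I would simply drop the conditioning, $h(X_t|X^{t-1})\le h(X_t)$, since conditioning never increases differential entropy.

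Next I would invoke the Gaussian maximum-entropy principle: among all random variables with a fixed second moment, the Gaussian maximizes differential entropy. Since $Y_t$ and $W_{t-1}$ are independent and $\mathbb{E}[Y_t^2]\le P_t$, we have $\mathbb{E}[X_t^2]=\mathbb{E}[Y_t^2]+W\le P_t+W$, so $h(X_t)\le\tfrac12\log(2\pi e(P_t+W))$. Combining the three estimates yields $I(Y^t;X_t|X^{t-1})\le\tfrac12\log(2\pi e(P_t+W))-\tfrac12\log(2\pi e W)=\tfrac12\log(1+P_t/W)=C(P_t)$. Summing over $t=1,\dots,T$ and applying Theorem~\ref{theo2} gives $\bar{L}_T=I(\Theta;X^T)=I(Y^T\rightarrow X^T)\le\sum_{t=1}^T C(P_t)$.

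The step I expect to be most delicate is the justification of $h(X_t|Y^t,X^{t-1})=h(W_{t-1})$: one must check both that $Y_t$ is a deterministic function of the conditioning variables (so that it shifts out of the entropy without changing it) and that $W_{t-1}$ is genuinely independent of the \emph{entire} past $(Y^t,X^{t-1})$, which follows from the i.i.d.\ structure of the process noise together with its independence from $\Theta$. A secondary point worth flagging is that the two inequalities above become equalities simultaneously precisely when $X_t$ is independent of $X^{t-1}$ and $Y_t\sim\mathcal{N}(0,P_t)$; this observation foreshadows the necessary-and-sufficient optimality conditions announced earlier and should be recorded for later use.
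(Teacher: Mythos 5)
Your proposal is correct and follows essentially the same route as the paper's proof: expand the conditional mutual information, identify $h(X_t|Y^t,X^{t-1})=h(W_{t-1})$, drop the conditioning on $X^{t-1}$, and apply the Gaussian maximum-entropy bound under the second-moment constraint before summing via the directed-information identity. Your added care in justifying the independence of $W_{t-1}$ from $(Y^t,X^{t-1})$ and your remark on when the two inequalities are simultaneously tight match what the paper states (the latter in its subsequent tightness discussion), so there is nothing to correct.
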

 
\begin{corollary}
\label{coro}
For the problem with power constraint
\begin{equation}
\begin{split}
\label{dir-pr-s}
  \max_{P(Y^T||X^T)}\  & I(Y^T\rightarrow X^{T})\\
  \text{s.t.} \quad &\frac{1}{T}\sum_{t=1}^T\mathbb{E}[Y^2_t]\leq P,
  \end{split}
\end{equation}
we have $\bar{L}_T= \sum_{t=1}^T \bar{F}_t \leq \sum_{t=1}^T C(P_t) \leq T C(P)$, where $P_t$ denotes channel input power at time $t$. The final inequality comes from the fact that $C(.)$ is concave and $\frac{1}{T}\sum_{t=1}^T P_t=P$. This corollary extends the asymptotic result mentioned in Theorem~\ref{theo-fbc} to cases with finite $T$.
\end{corollary}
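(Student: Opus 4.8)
The plan is to reduce the average-power problem~(\ref{dir-pr-s}) to the per-step bound already supplied by the Lemma. Given any causally conditioned distribution $P(Y^T||X^{T-1})$ feasible for~(\ref{dir-pr-s}), I would first define the realized per-step powers $P_t := \mathbb{E}[Y_t^2]$. With this choice the same distribution is trivially feasible for the step-wise problem~(\ref{dir-pr}) at each $t$, since the constraint $\mathbb{E}[Y_t^2]\le P_t$ then holds with equality, so the Lemma applies verbatim and gives $\bar{F}_t = I(Y^t;X_t|X^{t-1}) \le C(P_t)$ for every $t$.

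Next I would sum these inequalities over $t=1,\dots,T$ and invoke the identity from Theorem~\ref{theo2}, namely $\bar{L}_T = I(\Theta;X^T) = I(Y^T\rightarrow X^T) = \sum_{t=1}^T \bar{F}_t$, to obtain $\bar{L}_T \le \sum_{t=1}^T C(P_t)$. It then remains to bound $\sum_{t=1}^T C(P_t)$ subject only to the feasibility constraint $\frac{1}{T}\sum_{t=1}^T P_t \le P$. Since $C(x) = \frac{1}{2}\log(1+x/W)$ is concave on $\mathbb{R}_{+}$, Jensen's inequality yields $\frac{1}{T}\sum_{t=1}^T C(P_t) \le C\left(\frac{1}{T}\sum_{t=1}^T P_t\right)$; and since $C$ is nondecreasing, $C\left(\frac{1}{T}\sum_{t=1}^T P_t\right) \le C(P)$. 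Multiplying through by $T$ gives $\bar{L}_T \le \sum_{t=1}^T C(P_t) \le T C(P)$.

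The only point requiring care --- more a bookkeeping matter than a real obstacle --- is the interface between the two power-constraint formulations: the Lemma is stated for an exogenously fixed sequence $\{P_t\}$, whereas~(\ref{dir-pr-s}) budgets only the average, so one must take $\{P_t\}$ to be the powers \emph{induced} by the chosen policy rather than free design parameters, and then lean on monotonicity of $C$ to absorb the slack when $\frac{1}{T}\sum_t P_t < P$. Beyond this, the argument is a direct application of the Lemma together with Jensen's inequality, with no further estimates needed; it transfers the asymptotic bound of Theorem~\ref{theo-fbc} to a finite horizon $T$ precisely because the per-step Lemma bound was established without any appeal to $T\to\infty$.
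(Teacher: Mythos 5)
Your proof is correct and takes essentially the same route as the paper's: apply the Lemma's per-step bound $\bar{F}_t\le C(P_t)$ with $P_t:=\mathbb{E}[Y_t^2]$, sum using the identity $\bar{L}_T=\sum_t\bar{F}_t$, and conclude via Jensen's inequality for the concave $C$. Your additional observation that monotonicity of $C$ is needed to absorb the slack when $\frac{1}{T}\sum_t P_t<P$ is a minor but valid refinement of the paper's statement, which asserts equality there.
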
 

Corollary~\ref{coro} states that the solution to (\ref{mainprob}) is bounded from above by $TC(P)$. In the following sections, we discuss the conditions under which this upper bound is achievable and build the connection between the excitation cost and the input power of the equivalent channel.  

\subsection{Tightness of the Upper Bound}
If we assume $Y_t$ is Gaussian and we make full use of power budget i.e. $\mathbb{E}[Y^2_t]= P_t$, the inequality (\ref{lem1d}) holds with equality. The inequality in (\ref{lem1b}) is also \textit{tight} if $X_t$ and $X^{t-1}$ are independent. This tightness, presuming the (\ref{lem1d}) holds with equality, 
proves that a coding scheme achieves the maximum possible increase of the information gain $C(P_t)$ if and only if $X_t$ and $X^{t-1}$ are independent. This supports the idea that the excitation strategy is optimal (most informative) if it results in statistically independent observations $X_t$.

In the field of information theory, the fact that a coding-decoding scheme which achieves the channel capacity will produce a sequence of temporally independent channel output is not a new result. For example, it is shown in \cite{el2011network} that the well-known Schalkwijk-Kailath (SK) coding scheme induces independent  outputs. 

In our ID setting, the upper bound can be achieved if and only if $X_t$ is an independent process and the input distribution in the equivalent channel matches with one of the capacity achieving distributions like the SK scheme. Unfortunately, our equivalent coding scheme has a restricted structure and no choice of excitation policy $U_t=g_t(X^t)$ recovers SK scheme. Therefore, the SK scheme cannot be directly implemented in our ID setting.

\subsection{Infeasibility of Super-linear Convergence}
In previous sections, the provided bounds are accounting for the channel input power $P$ and not the excitation cost $J$. Establishing an explicit relation between the excitation cost and the channel input power for general excitation policy $U_t=g_t(X^t)$ is nontrivial. However, it is shown in the next theorem that $P$ is finite under any excitation scheme that yields a finite $J$.
\begin{theorem}
\label{theo-bound}
For any excitation policy 
\begin{align}
P_{T} \leq \frac{2}{q} J_T+2W.
\end{align}
\end{theorem}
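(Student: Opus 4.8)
The plan is to bound the per-step channel input power $\mathbb{E}[Y_t^2]$ by the state power $\mathbb{E}[X_t^2]$ using the channel equation, and then to absorb the resulting sum of state powers into the excitation cost $J_T$, exploiting the fact that $J_T$ already contains exactly these nonnegative terms. First I would recall that in the equivalent channel of Figure~\ref{fig:1} the input at time $t$ is $Y_t = Z_{t-1}^\intercal\Theta = AX_{t-1}+BU_{t-1}$ while the output is $X_t = Y_t + W_{t-1}$ by the sensor model (\ref{equi-sys}); hence $Y_t = X_t - W_{t-1}$. The elementary inequality $(a-b)^2\leq 2a^2+2b^2$ then gives
\[
\mathbb{E}[Y_t^2]=\mathbb{E}[(X_t-W_{t-1})^2]\leq 2\mathbb{E}[X_t^2]+2\mathbb{E}[W_{t-1}^2]=2\mathbb{E}[X_t^2]+2W.
\]
No independence between $Y_t$ and $W_{t-1}$ is needed at this step, although it does hold, since $Y_t$ is a deterministic function of $W_0,\dots,W_{t-2}$.

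Next I would average over $t=1,\dots,T$. Writing $P_T=\frac{1}{T}\sum_{t=1}^T\mathbb{E}[Y_t^2]$ for the horizon-averaged channel input power, the display above yields $P_T\leq \frac{2}{T}\sum_{t=1}^T\mathbb{E}[X_t^2]+2W$. Since $X_0=0$, the summation range $t=1,\dots,T$ can be replaced by $t=0,\dots,T$ without changing the sum. Finally I would compare this with $J_T$: the excitation cost contains the block $\frac{q}{T}\big(\sum_{t=0}^{T-1}\mathbb{E}[X_t^2]+\mathbb{E}[X_T^2]\big)=\frac{q}{T}\sum_{t=0}^{T}\mathbb{E}[X_t^2]$ together with the nonnegative terms $\frac{r}{T}\sum_{t=0}^{T-1}\mathbb{E}[U_t^2]$, so that $\frac{1}{T}\sum_{t=0}^{T}\mathbb{E}[X_t^2]\leq \frac{1}{q}J_T$. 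Substituting this bound gives $P_T\leq\frac{2}{q}J_T+2W$.

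There is no genuine difficulty in this argument; the only points requiring care are the bookkeeping identification of the channel input $Y_t$ with $X_t-W_{t-1}$ using the correctly indexed (one-step-delayed) noise $W_{t-1}$, and the alignment of the summation ranges in the definitions of $P_T$ and $J_T$ via the initial condition $X_0=0$. I note in passing that tracking the cross term $\mathbb{E}[Y_tW_{t-1}]=0$ instead of applying $(a-b)^2\leq 2a^2+2b^2$ would produce the sharper estimate $P_T\leq\frac{1}{q}J_T-W$, but the weaker bound stated in the theorem already suffices to conclude, via Corollary~\ref{coro}, that the cumulative information gain grows at most linearly and hence that super-linear convergence is impossible.
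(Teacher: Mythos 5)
Your proof is correct and follows essentially the same route as the paper: write $Y_t = X_t - W_{t-1}$, apply $(a-b)^2 \le 2a^2 + 2b^2$, and absorb $\frac{2}{T}\sum_t \mathbb{E}[X_t^2]$ into $\frac{2}{q}J_T$ using the nonnegativity of the remaining terms in the cost. Your closing remark that exploiting $\mathbb{E}[Y_t W_{t-1}]=0$ yields the sharper bound $P_T \le \frac{1}{q}J_T - W$ is a valid (if minor) improvement over the stated result.
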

\begin{proof}    
\begin{subequations}
\begin{align}
\label{bound-1}
P_T &=\frac{1}{T}\sum_{t=1}^T \mathbb{E}[Y_t^2]=\frac{1}{T}\sum_{t=1}^T \mathbb{E}[(X_{t}-W_{t-1})^2]\\ \label{bound-2}
&\leq \frac{1}{T}\sum_{t=1}^T \mathbb{E}[2 X^2_{t}+2 W^2_{t-1}] 
\leq \frac{2}{q} J_T+2W.
\end{align}
\end{subequations}
From (\ref{bound-1}) to (\ref{bound-2}), we have used Cauchy-Schwarz inequality. Last inequality (\ref{bound-2}) is trivial result of  (\ref{lqg}).
\end{proof}

Theorem~\ref{theo-bound} implies that for any deterministic closed-loop system ID scheme with bounded excitation cost, $C(P)$ is also bounded and thus $\bar{L}_T=\mathcal{O}(T)$, meaning that the scheme can at most achieve a linear convergence. To the best of our knowledge, the infeasibility of super-linear convergence was not shown previously for finite number of ID steps $T$. Theorem~\ref{theo-bound} completes our analysis and establishes an explicit relation between the excitation cost and the information gain as $\bar{L}_T\leq TC(\frac{2}{q} J_T+2W)$.

\section{Numerical Demonstration}
In this section, we consider an example of a linear system with initial belief as $\Theta \sim \mathcal{N}([0\;0]^\intercal, I_{2})$ and $W=0.1$ in infinite horizon limit, where the true parameters are $\theta=[a \; b]^\intercal=[0.9 \; 1]^\intercal$. The excitation policy is adopted as $U_t=kX_{t}(1+\sin(X_t))$ for different stable values of linear feedback gain $k\in(-1.4,-0.9)$, where the stability can be demonstrated by Popov Criterion \cite{khalil2002nonlinear}. Based on the data from Monte Carlo simulation of the system averaged for $1000$ realizations, an empirical comparison between the upper bound for the asymptotic convergence rate and actual convergence rate as functions of channel input power $P$ is provided in Fig.~\ref{fig-info}. 
\begin{figure}[thpb]
      \centering
      \includegraphics[scale=0.55]{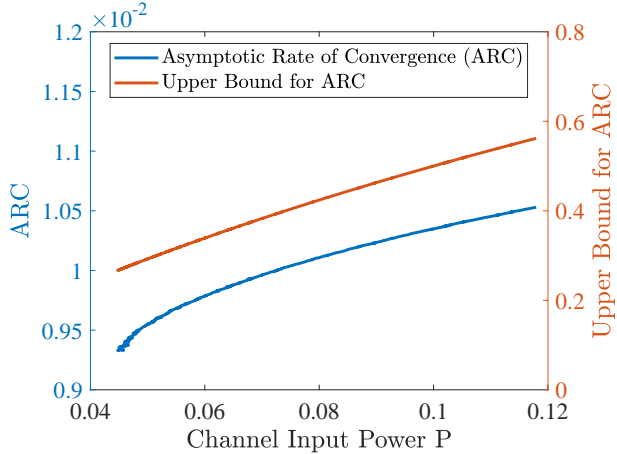}
      \caption{Actual asymptotic convergence rate versus the upper bound of the asymptotic convergence rate.}
      \label{fig-info}
   \end{figure} 
As demonstrated in Fig.~\ref{fig-info}, the  naive selection of excitation policy will result in poor performance for our choice of $U_t$, there is a noticeable gap between the provided upper bound and the actual rate of convergence.

\section{Conclusion and Future Work}
We studied a fundamental limitation of system ID schemes of linear systems under quadratic control costs via deterministic excitation policies. We show that under specific choice of information utility, the closed-loop ID problem can be modeled as Gaussian channel coding problem with feedback. With this connection, we showed that the rate of convergence (information utility obtained per ID step) is bounded from above by capacity of the equivalent channel. This implies that the convergence is always sublinear or linear at best. Finally, it was shown that a feedback policy achieves the provided upper-bound if and only if the input distribution to the corresponding channel matches the capacity achieving distribution and the characteristics of the capacity achieving schemes like independence of output process were discussed.

Our future research will focus on discovering, or examining the existence of, policies that satisfy this optimality condition. If such optimal policy does not exist, the main question will be how to narrow the gap between the achievable convergence rate and its upper bound. Generalization of the modeling and the results over the space of non-deterministic (stochastic) policies is another interesting future direction. We will also investigate the relationship between information utility metric considered in this paper and regret.     

\addtolength{\textheight}{-12cm}   






\bibliographystyle{IEEEtran}
\bibliography{ref}

\end{document}